\newcommand\pa{\partial}\newcommand\w{\wedge}
\newcommand\tot{\operatorname{tot}}
\newcommand\In{\operatorname{in}}
\newcommand\dR{\operatorname{dR}}
\newcommand\bA{\boldsymbol A}
\newcommand\bM{\boldsymbol M}
\newcommand\bH{\boldsymbol H}
\newcommand\bE{\boldsymbol E}
\newcommand\bJ{\boldsymbol J}
\newcommand\bx{\boldsymbol x}
\newcommand\by{\boldsymbol y}
\newcommand\bn{\boldsymbol n}
\newcommand\br{\boldsymbol r}
\newcommand\ds{d{\boldsymbol s}}
\newtheorem{theorem}{Theorem}
\newtheorem{remark}{Remark}
\begin{document}

\title{A consistency condition for the vector potential in multiply-connected 
domains}

\author{Charles L. Epstein\thanks{Departments of Mathematics
    and Radiology, University of Pennsylvania,
    209 South 33rd St., Philadelphia, PA 19104.
    E-mail: {cle@math.upenn.edu}.
    Research partially supported by NSF grants DMS06-03973 and 
    DMS09-35165, 
    DARPA grant HR0011-09-1-0055, 
    and NSSEFF Program Award FA9550-10-1-0180.}, \
Zydrunas Gimbutas\thanks{Courant
    Institute,
    New York University, 251 Mercer Street, New York, NY 10012.
    E-mail: {gimbutas@cims.nyu.edu}. 
    Research partially supported by the Air Force Office of 
    Scientific Research under MURI grant FA9550-06-1-0337 and 
    NSSEFF Program Award FA9550-10-1-0180.}, \
Leslie Greengard\thanks{Courant Institute,
    New York University, 251 Mercer Street, New York, NY 10012.
    E-mail: {greengard@cims.nyu.edu}. 
    Research partially supported by the U.S. Department of Energy under 
    contract DEFG0288ER25053 and by the Air Force Office of Scientific
    Research under MURI grant FA9550-06-1-0337 and NSSEFF Program 
    Award FA9550-10-1-0180.}, \\
Andreas Kl\"ockner\thanks{Courant Institute,
    New York University, 251 Mercer Street, New York, NY 10012.
    E-mail: {kloeckner@cims.nyu.edu}. 
    Research partially supported by the U.S. Department of Energy 
    under contract DEFGO288ER25053 and by the Air Force Office of Scientific 
    Research under NSSEFF Program Award FA9550-10-1-0180.}, \
and Michael O'Neil\thanks{Courant Institute,
    New York University, 251 Mercer Street, New York, NY 10012.
    E-mail: {oneil@cims.nyu.edu}. 
    Research partially supported by the Air Force Office of Scientific
    Research under NSSEFF Program Award FA9550-10-1-0180.
    \newline {\bf Keywords}:electromagnetics, magnetostatics,
    multiply-connected domains, vector potential}}

\date{\today}

\maketitle
\numberwithin{equation}{section}

\begin{abstract}
  A classical problem in electromagnetics concerns the representation
  of the electric and magnetic fields in the low-frequency or static
  regime, where topology plays a fundamental role. For multiply
  connected conductors, at zero frequency the standard boundary
  conditions on the tangential components of the magnetic field do not
  uniquely determine the vector potential.  We describe a
  (gauge-invariant) consistency condition that overcomes this
  non-uniqueness and resolves a longstanding difficulty in inverting
  the magnetic field integral equation.
\end{abstract}

\section{\label{sec:intro}Introduction}

We consider the problem of exterior electromagnetic 
scattering in the frequency domain, with particular attention paid to 
the behavior of the electric and magnetic fields in the static limit.
Following standard practice, we write 
$\bE^{\tot}(\bx) = \bE^{\In}(\bx) + \bE(\bx)$ and
$\bH^{\tot}(\bx) = \bH^{\In}(\bx) + \bH(\bx)$,
where $\{ \bE^{\In}, \bH^{\In} \}$ describe known incident electric and magnetic 
fields, $\{ \bE, \bH \}$ denote the scattered field of interest,
and $\{ \bE^{\tot}, \bH^{\tot} \}$ denote the {\em total} fields.
We write Maxwell's equations in the form
\begin{eqnarray}
\label{maxwellfd}
\nabla \times \bH^{\tot} &=& - i \omega \epsilon \bE^{\tot}  \\
\nabla \times \bE^{\tot} &=&  i \omega \mu \bH^{\tot} \, . \nonumber
\end{eqnarray}
where $\epsilon, \mu$ are
the permittivity and permeability, and 
we define the wavenumber by $k = \omega \sqrt{\epsilon \mu}$.
The scattered field is assumed to satisfy the 
Sommerfeld-Silver-M\"uller radiation condition:
\begin{equation}
\label{silvermuller}
 \lim_{r\rightarrow \infty}
 \left( \bH \times \frac{\br}{r} - 
\frac{\mu}{\epsilon} \bE \right) = o \left( \frac{1}{r} \right).
\end{equation}

For a perfect conductor $\Omega$,
two homogeneous conditions to be enforced on $\Gamma$, the 
boundary of $\Omega$, are \cite{JACKSON,PAPAS}:
\begin{equation}
\bn \times \bE^{\tot} = 0, \quad  \bn \cdot \bH^{\tot} = 0.
\label{EHbc}
\end{equation}
It is also well-known that, on $\Gamma$, we must have
\begin{equation}
\bn \times \bH^{\tot} = \bJ, \quad  \bn \cdot \bE^{\tot} = \rho/\epsilon \, ,
\label{EHbc2}
\end{equation}
assuming that the scattered fields are induced by a physical surface 
current $\bJ$ and a corresponding charge $\rho$ with
\begin{equation}
\nabla_\Gamma \cdot \bJ = i\omega \rho \, ,
\label{contcondition}
\end{equation}
where $\nabla_\Gamma$ denotes the surface divergence.

Our primary interest  here is with the classical representation of
electromagnetic fields in terms of the vector and scalar potential 
(in the Lorenz gauge):
\begin{eqnarray}
\bE &=& i \omega \bA - \nabla \phi  \label{Epotrep} \\
\bH &=& \frac{1}{\mu} \nabla \times \bA \label{Hpotrep} 
\end{eqnarray}
where 
$\bA[\bJ](\bx) = \mu \int_\Gamma g_k(\bx-\by) \bJ(\by) dA_{\by}$
and $\phi(\bx) = \frac{1}{i\omega\epsilon\mu} \nabla \cdot \bA$,
with
$g_k(\bx) = \frac{e^{ik |\bx|}}{ 4 \pi | \bx|}$.
In scattering problems,
$\bJ$, the unknown surface current, is a tangential vector field.

Using the first condition in (\ref{EHbc2}), with care in taking the
limit as $\bx$ approaches the boundary $\Gamma$ from the exterior
domain, we obtain the magnetic field integral equation (MFIE):
\begin{equation}
\frac{1}{2} \bJ(\bx) - \bM_k[\bJ](\bx)
= \bn(\bx) \times \bH^{\In}(\bx) \qquad  (\bx \in \Gamma),
\label{mfie}
\end{equation}
where 
\begin{equation}
\bM_k[\bJ](\bx) = \bn(\bx) \times \nabla \times 
\int_\Gamma g_k(\bx-\by) \bJ(\by) dA_{\by} \, .
\label{Kdef}
\end{equation}
It is a second-kind Fredholm equation,
originally suggested by Maue \cite{MAUE}.
See \cite{CK,CHEW,JIN,NEDELEC} for detailed discussions.

While the MFIE has a sequence of spurious resonances at higher and
higher frequencies, below the first such resonance it yields an
invertible and well-conditioned linear system, so long as the
scatterer is simply-connected.  When the scatterer is topologically
non-trivial, however, the MFIE breaks down, having a non-trivial
null-space at zero frequency.  In the static limit, the dimension of
the null-space is equal to the genus of the surface. While this
problem has been carefully analyzed \cite{MICH,CK,WERNER}, no simple
remedy has been provided to date.

In this paper, we show that the difficulties encountered by
the MFIE can be resolved
through the enforcement of an apparently new consistency condition on the 
vector potential, involving line integrals around \emph{$B$-cycles} (see Fig. \ref{fig1}). 
Like the Aharonov-Bohm effect, it illustrates the fundamental role of the vector
potential, but in a classical regime.
We begin with the static case, since it is mathematically simpler and of importance 
in its own right.

\begin{figure}
\centering
\includegraphics[width=8cm]{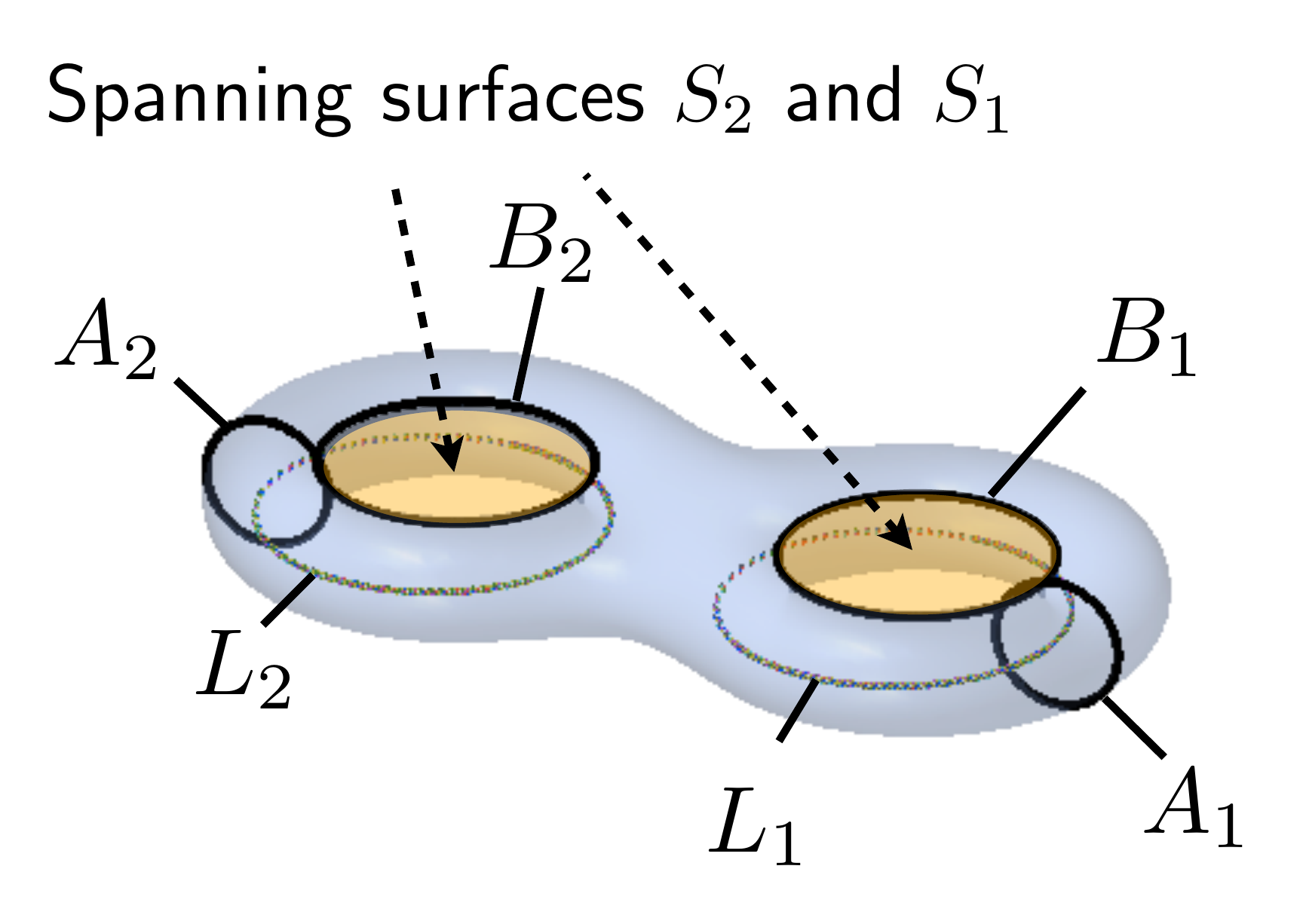}
\caption{A surface of genus 2. The $A$-cycles are loops whose spanning 
surface is in the interior of the scatterer. The $B$-cycles have 
spanning surfaces $S_1,S_2$ 
which lie in the 
exterior domain. The loops $L_1,L_2$ lie in the interior domain and pass through 
the $A$-cycles.}
\label{fig1}
\end{figure}

\section{Magnetostatics}

In the zero frequency limit, Maxwell's equations are generally said to uncouple,
with \emph{scattered} electrostatic and magnetostatic fields denoted by ($\bE_0, \bH_0$),
respectively. In the unbounded domain $\mathbb{R}^3\setminus\overline{\Omega}$ 
exterior to $\Omega$,
$\bE_0$ satisfies 
\begin{equation}
\nabla \times \bE_0 = 0, \  
\nabla \cdot \bE_0 = 0, 
\label{E0eqs}
\end{equation}
subject to the boundary condition
\begin{equation}
\bn \times \bE_0 = - \bn \times \bE_0^{\In}\big|_\Gamma \, ,
\label{Estatbc}
\end{equation}
where $\bn$ denotes the outward normal to $\Omega$. 
In $\mathbb{R}^3\setminus\overline{\Omega}$, $\bH_0$ satisfies
\begin{equation}
\nabla \times \bH_0 = 0, \  
\nabla \cdot \bH_0 = 0, 
\label{H0eqs}
\end{equation}
subject to the boundary condition
\begin{equation}
\bn \cdot \bH_0 = -\bn \cdot \bH_0^{\In} \big|_\Gamma \, .
\label{Hstatbc}
\end{equation}
Vector fields that are both curl-free and divergence-free are called
harmonic vector fields.

A fundamental difficulty arises in the static theory that is topological
in nature. More precisely, if the number of connected components of $\Omega$
is denoted by $m$, then there is an $m$-dimensional space of nontrivial
vector fields $\bE_0^h$ 
in $\mathbb{R}^3\setminus\overline{\Omega}$, satisfying
\begin{equation}
\nabla \times \bE_0^h = 0, \  \nabla \cdot \bE_0^h = 0, \ 
\bn \times \bE_0^h = 0 \big|_\Gamma \, .
\end{equation}
They are generally referred to as (exterior) Dirichlet fields and
correspond to setting the electrostatic potential $\phi$ on each
disjoint conductor to a different constant.
This space of fields is, of course, essential in studying
capacitance problems in a system of disjoint conductors, 
where $\bE_0^h = -\nabla \phi$.

Here, we concentrate on magnetostatics, where nontrivial solutions
$\bH_0(\bx)$ to 
\begin{equation}
\nabla \times \bH_0 = 0, \  \nabla \cdot \bH_0 = 0, \ 
\bn \cdot \bH_0 = 0 \big|_\Gamma \, ,
\end{equation}
are referred to as interior or exterior Neumann fields, depending on whether 
$\bx \in \Omega$ or $\bx \in \mathbb{R}^3\setminus\overline{\Omega}$, respectively. 
We let $H^+$ denote the space of exterior Neumann fields and 
$H^-$ denote the space of interior Neumann fields.
It is a classical fact that the dimension
of the spaces $H^+$ and $H^-$ is $g$, 
where $g$ is the genus of the boundary $\Gamma$
\cite{CK}.

In short, the boundary condition  (\ref{Hstatbc}) alone determines a unique field
only in the simply connected case ($g=0$). A natural representation 
in that setting is to seek $\bH_0$ as the gradient of the magnetic scalar potential,
$\bH_0 = \nabla \Psi$, with $\Psi$ a single-valued harmonic function satisfying
\begin{equation}
 \bn \cdot \nabla \Psi = -\bn \cdot \bH_0^{\In} \, .
\end{equation}

In the multiply connected case,
additional data is needed to make the magnetostatic problem well-posed,
all of which are designed to account for current loops that may be flowing
through the handles of the domain. One such condition is to require that
the line integrals of $\bH_0$ around each $A$-cycle be specified:
\begin{equation}
 \int_{A_j} \bH_0 \cdot \ds = \alpha_i \, .  \label{Hextra}
\end{equation}
It can be shown that the solution to 
(\ref{H0eqs},\ref{Hstatbc},\ref{Hextra}) is unique (see, for example,
\cite{VB,KRESS,CK,BOSSAVIT,MAYERGOYZ,HENROTTE}).

The use of the scalar potential can be extended to the multiply
connected case in two ways - either by introducing $g$ cuts on the
boundary $\Gamma$ and allowing for a potential jump across each cut
\cite{VOURDAS}, or by introducing $g$ current loops in the interior domain
$\Omega$ that span the fundamental group of $\Omega$ (essentially one
passing through each $A$-cycle - see Fig. \ref{fig1}) \cite{VB,BOSSAVIT}. 
In the latter case, it is straightforward to see that one can represent
$\bH_0$ as
\[ 
\bH_0 = \nabla \Psi + \frac{1}{\mu} \nabla \times \sum_{i=1}^g \alpha_i \bA[{\cal L}_i] \, ,
\]
where ${\cal L}_i$ is a loop of unit current density flowing through $L_i$
(Fig. \ref{fig1}), allowing the scalar potential $\Psi$ to be single-valued.

It is also possible to extend the scalar potential approach
to the full Maxwell equations at non-zero frequencies through a 
generalization of the Lorenz-Debye-Mie formalism given in \cite{EG,EGO}, 
but this involves non-physical variables. Unlike the above formulation,
the MFIE (\ref{mfie}) uses a physical unknown, the surface 
current $\bJ,$ and extends naturally away from the static limit through the
representation (\ref{Epotrep})-(\ref{Hpotrep}).  In this paper we seek
to better understand, in the static case, the range and null-space of the MFIE. 

\section{Magnetostatics using the vector potential}
   
There is a substantial literature on magnetostatics and the representation of 
harmonic vector fields in the form of a curl (see, for example,
\cite{MICH,WERNER,BOSSAVIT,HENROTTE,FRIEDRICHS,VOURDAS}). We do not seek to review 
the theory here, except where it is of direct relevance to the MFIE.
At zero frequency, the MFIE takes the form
\begin{equation}
\left[ \frac{1}{2} I - M_0 \right] \bJ
= \bn \times \bH^{\In} \, ,
\label{mfie0}
\end{equation}
where $I$ denotes the identity operator and we have dropped the
argument $\bx$ for the sake of clarity.  We let
$\{H^+_j:\:j=1,\dots,g\}$ denote a basis for the exterior Neumann
fields and $\{H^-_j:\:j=1,\dots,g\}$ denote a basis for the interior
Neumann fields.  We let $\{Z^+_j = H^+_j \big|_\Gamma:\:j=1,\dots,g\}$
denote the boundary values of the exterior Neumann fields and $\{Z^-_j
= H^-_j \big|_\Gamma:\:j=1,\dots,g\}$ denote the boundary values of
interior Neumann fields, which are perforce vector fields tangent to
$\Gamma.$ As shown in \cite{WERNER,CK,MICH}, the relevant null-spaces are
known to be
\begin{equation}
  \begin{split}
    N \left(\frac{1}{2}I - M_0 \right) &=\{\bn\times Z^+_j:\:j=1,\dots,g\}\\
 N \left(\frac{1}{2}I - M_0' \right) &=\{Z^-_j:\:j=1,\dots,g\} \, ,
  \end{split}
\end{equation}
where $M'$ is the adjoint of $M$.  From Fredholm theory, the
solvability conditions for the MFIE are, therefore, that the inner
products $(\bn\times\bH^{\In},Z^{-}_j)=0$ for $j=1,\dots, g.$ This is
always the case for an incoming field that is generated by currents lying
outside of a simply connected neighborhood of $\Omega$ \cite{WERNER}.
In short, for the physical scattering problem, the operator
$\frac{1}{2}I - M_0$ is {\em rank} deficient but not {\em range}
deficient, and it remains only to develop a set of constraints that
make it invertible.

\begin{figure}
\centering
\includegraphics[width=15cm]{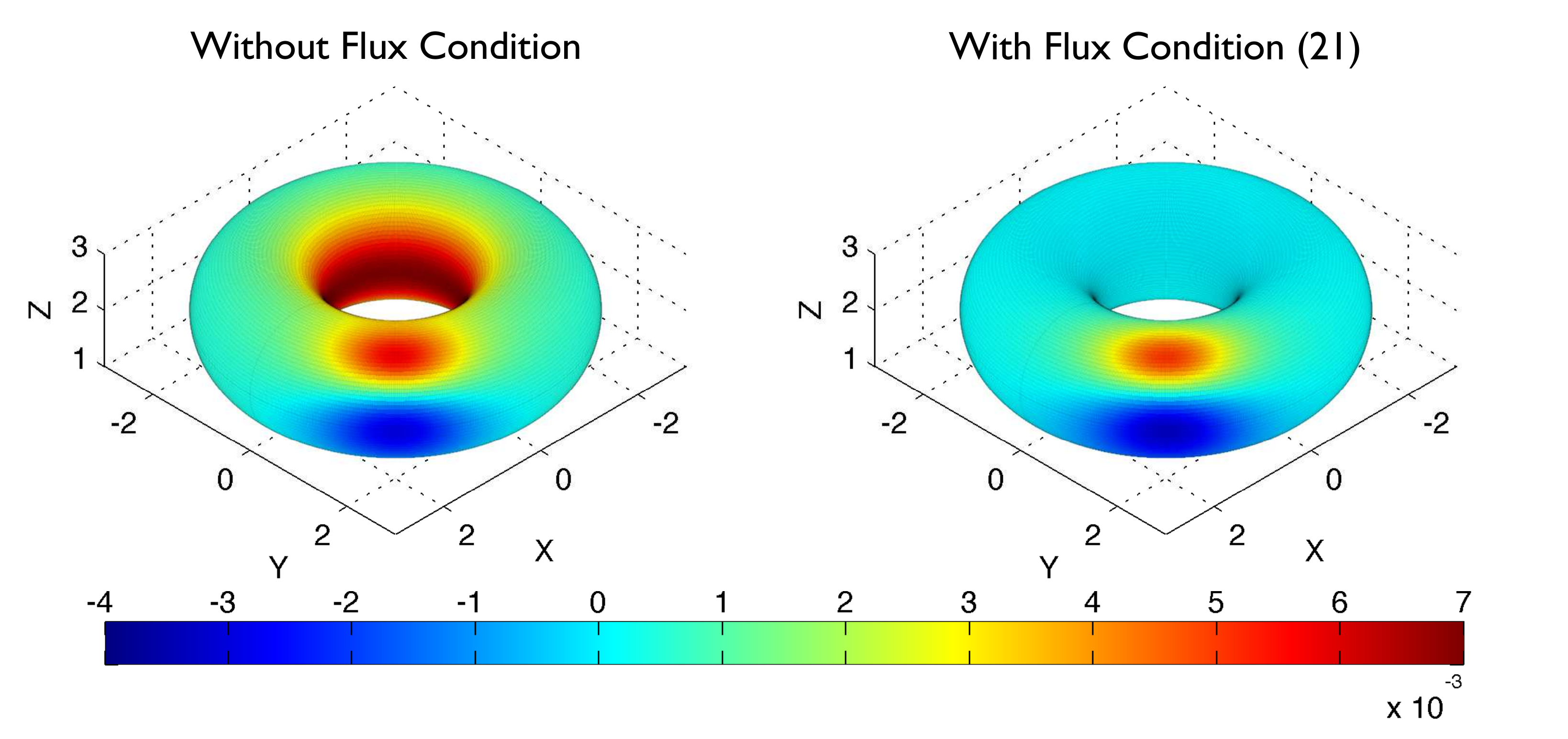}
\caption{The induced surface current on a torus
at wavenumber $k=10^{-16}$ without (left) and with (right) the auxiliary 
consistency (flux) condition. The real part of the current in the azimuthal
direction is shown. 
The incident field is due to a unit strength current 
loop of radius $.5$ located at 
$(3,3,4)$.}
\label{fig2}
\end{figure}

\begin{figure}
\centering
\includegraphics[width=15cm]{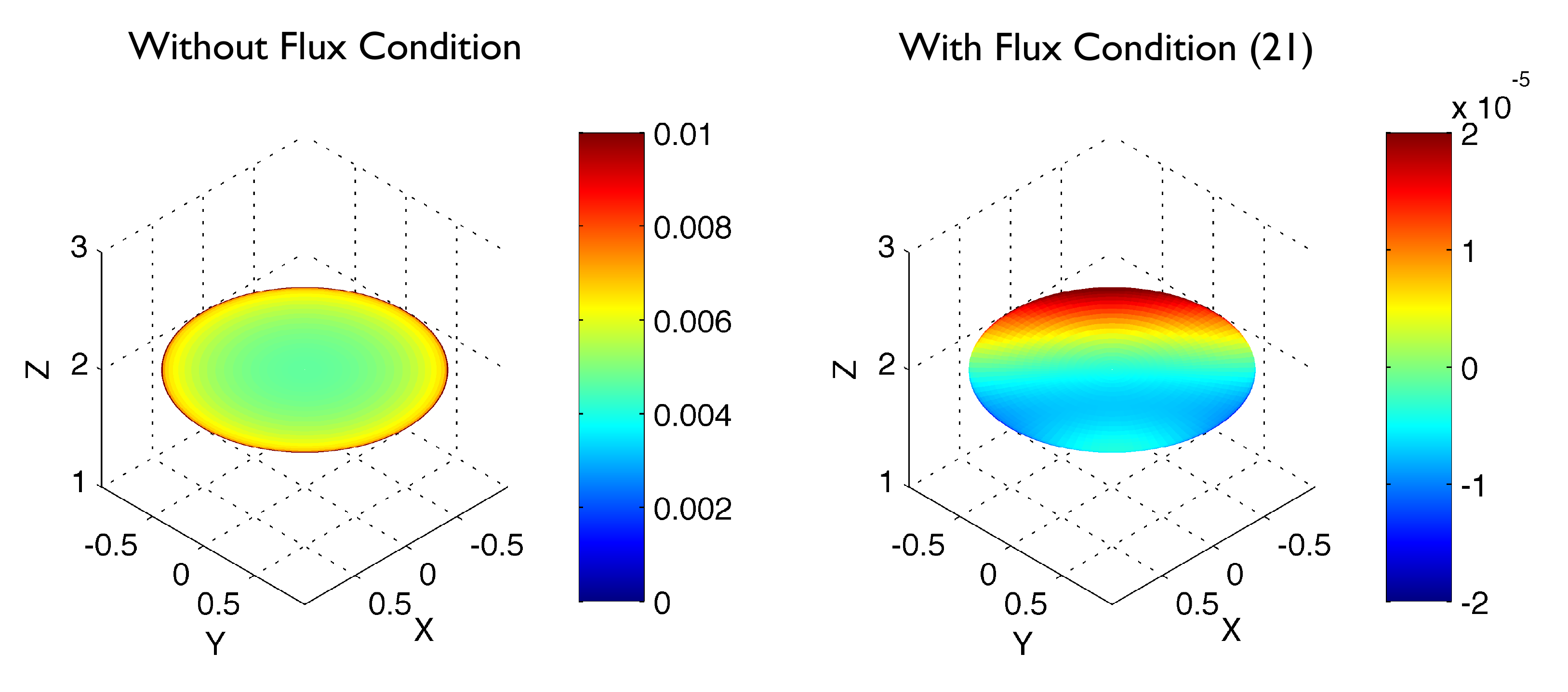}
\caption{The real part of $\bH^{\tot} \cdot \hat{\boldsymbol{z}}$ 
on surfaces spanning
the holes of the tori in Fig. \ref{fig2}. Note the different scales on the 
color bars, as well as the change in sign of the point-wise flux
on the right. The net flux through the left surface is $0.019$ while,
on the right, it is zero to machine precision.}
\label{fig3}
\end{figure}

We turn now to the main point of the present paper: that such constraints
can be found and that they come from an analysis of the {\em electric} field
in the limit $\omega \rightarrow 0$. 
We begin by noting that on a perfect conductor,
the vanishing of the total tangential electric field (\ref{EHbc}) and
Stokes' theorem allows us to write 
\[
 \int_{B_j} \bE \cdot {\ds} =  - \int_{B_j} \bE^{\In} \cdot {\ds} =
- \int_{S_j} i \omega \mu \, \bH^{\In} \cdot {\bn} \, dA \, ,
\]
where $S_j$ is a spanning surface for the $B$-cycle $B_j$.
Dividing both sides by $i\omega$, using the 
representation (\ref{Epotrep}), and noting that the line integral
of a gradient vanishes, we have

\begin{theorem} \label{thm1}
Let $\Omega$ be a multiply-connected perfect conductor and let $B_j$ be a 
$B$-cycle and  $S_j$ be a spanning surface for $B_j$. Then
\begin{equation}
 \int_{B_j} \bA \cdot \ds = - \mu \int_{S_j} \bH^{\In} \cdot \bn dA \, . 
\label{Acond1}
\end{equation}
\end{theorem}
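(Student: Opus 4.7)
The plan is to apply the perfect conductor boundary condition $\bn\times\bE^{\tot}=0$ from (\ref{EHbc}) directly along the $B$-cycle $B_j\subset\Gamma$. Since $\bE^{\tot}$ has vanishing tangential component on the conductor and $B_j$ is a closed curve on $\Gamma$, we get $\oint_{B_j}\bE^{\tot}\cdot\ds=0$, and splitting $\bE^{\tot}=\bE^{\In}+\bE$ yields $\int_{B_j}\bE\cdot\ds=-\int_{B_j}\bE^{\In}\cdot\ds$. This is the bridge between the unknown scattered field (to which we will apply the potential representation) and the known incident field (to which we will apply Stokes' theorem).

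First I would evaluate the incident-field circulation on the right. Because the spanning surface $S_j$ lies entirely in the exterior domain $\mathbb{R}^3\setminus\overline{\Omega}$, where $\bE^{\In}$ is smooth and satisfies the source-free Faraday law $\nabla\times\bE^{\In}=i\omega\mu\bH^{\In}$, Stokes' theorem applies cleanly and produces $\int_{B_j}\bE^{\In}\cdot\ds=i\omega\mu\int_{S_j}\bH^{\In}\cdot\bn\,dA$. Next I would evaluate the scattered-field circulation on the left by substituting the Lorenz-gauge representation (\ref{Epotrep}): $\bE=i\omega\bA-\nabla\phi$. The key observation is that $\phi=\frac{1}{i\omega\epsilon\mu}\nabla\cdot\bA$ is single-valued in all of $\mathbb{R}^3$, because $\bA$ is defined by an absolutely convergent layer-potential integral over $\Gamma$. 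Consequently $\oint_{B_j}\nabla\phi\cdot\ds=0$, leaving $\int_{B_j}\bE\cdot\ds=i\omega\int_{B_j}\bA\cdot\ds$. Combining the two identities and dividing by $i\omega$ gives (\ref{Acond1}).

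The main technical subtlety I anticipate is justifying the trace of $\bA\cdot\ds$ on $\Gamma$ and the vanishing of $\oint_{B_j}\nabla\phi\cdot\ds$ in a rigorous boundary sense. For sufficiently regular surface currents $\bJ$, the single-layer potential $\bA$ and its tangential components extend continuously up to $\Gamma$ from the exterior, and $\phi$ inherits continuity of its tangential gradient along cycles on $\Gamma$, so the argument can be carried out in the limit as one approaches $B_j$ from the exterior (consistent with the limiting procedure used to derive the MFIE (\ref{mfie})). Note also that the identity is gauge-equivariant: any change of gauge $\bA\mapsto\bA+\nabla\chi$ with single-valued $\chi$ leaves the left-hand side of (\ref{Acond1}) unchanged, so the statement genuinely constrains the physical $B$-cycle flux of $\bA$ rather than a gauge artefact.
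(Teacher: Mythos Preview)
Your proposal is correct and follows essentially the same route as the paper: use $\bn\times\bE^{\tot}=0$ on $\Gamma$ to get $\int_{B_j}\bE\cdot\ds=-\int_{B_j}\bE^{\In}\cdot\ds$, apply Stokes' theorem on $S_j$ with Faraday's law to convert the incident circulation to $i\omega\mu\int_{S_j}\bH^{\In}\cdot\bn\,dA$, insert the representation $\bE=i\omega\bA-\nabla\phi$ and drop the exact term, then divide by $i\omega$. Your additional remarks on boundary traces and gauge equivariance go slightly beyond what the paper states explicitly, but they are consistent with it and do not alter the argument.
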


This condition makes sense for any $\omega \geq 0$. 
Alternatively, if we have access to the vector potential 
$\bA^{\In}$ which defines the incoming field, the same analysis yields the
even simpler condition:
\begin{equation}
\int_{B_j} \bA \cdot \ds = - \int_{B_j} \bA^{\In} \cdot \ds.
\label{Acond2}
\end{equation}

\begin{theorem}
Let $\Omega$ be a multiply-connected perfect conductor of genus $g$
with $B$-cycles $B_1,\dots,B_g$.
Then the MFIE augmented by conditions of the form (\ref{Acond1})
or (\ref{Acond2}) for $j=1,\dots,g$ has a unique solution.
\end{theorem}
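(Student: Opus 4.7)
The plan is to prove existence and uniqueness separately. For existence, the Fredholm solvability conditions identified above --- namely $(\bn\times\bH^{\In}, Z_j^-)=0$ --- hold for any incident field generated by sources outside a simply connected neighborhood of $\Omega$, so the MFIE admits a (non-unique) solution $\bJ_p$. Theorem \ref{thm1} then shows that every such $\bJ_p$ automatically satisfies the consistency conditions \eqref{Acond1}, and by Stokes' theorem applied to $\bA^{\In}$ on $S_j$ the same is true of \eqref{Acond2}. Hence the augmented system is consistent.

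For uniqueness, it suffices by linearity to show that the only null-space element $\bJ_0 = \sum_{j=1}^g c_j(\bn\times Z_j^+)$ satisfying $\oint_{B_i}\bA[\bJ_0]\cdot\ds = 0$ for $i=1,\dots,g$ is $\bJ_0 = 0$. The first key step is to identify the induced scattered field $\bH_0 := \mu^{-1}\nabla\times\bA[\bJ_0]$ with $\sum_j c_j H_j^+$ in the exterior domain. I would do this by observing that the piecewise field equal to $\sum_j c_j H_j^+$ in $\mathbb{R}^3\setminus\overline{\Omega}$ and $0$ in $\Omega$ is curl- and divergence-free off $\Gamma$, has continuous normal component (thanks to $\bn\cdot H_j^+=0$), and exhibits a tangential jump precisely equal to $\bn\times Z_j^+ = \bJ_0$; since $\bH_0$ has the same jumps and both decay at infinity, their difference is globally harmonic in $\mathbb{R}^3$ and hence zero.

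By Stokes' theorem applied on $S_i \subset \mathbb{R}^3\setminus\overline{\Omega}$, the hypothesis becomes
\begin{equation*}
0 \;=\; \oint_{B_i}\bA[\bJ_0]\cdot\ds \;=\; \mu\iint_{S_i}\bH_0\cdot\bn\,dA \;=\; \sum_j P_{ij}\,c_j, \qquad P_{ij} \,:=\, \mu\iint_{S_i}H_j^+\cdot\bn\,dA.
\end{equation*}
The main obstacle --- and final step --- is to prove that the $g\times g$ period matrix $P$ is nonsingular. I would establish this via a classical Hodge-duality identity obtained by cutting $\mathbb{R}^3\setminus\overline{\Omega}$ along the $S_j$'s: on the cut exterior, any curl-free $\bH_1 \in H^+$ may be written as $\nabla\psi_1$ with $\psi_1$ single-valued but jumping by the $A$-period $\alpha_j^{(1)} = \oint_{A_j}\bH_1\cdot\ds$ across each $S_j$. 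Integration by parts, using $\bn\cdot\bH_2=0$ on $\Gamma$ and decay at infinity, then yields
\begin{equation*}
\int_{\mathbb{R}^3\setminus\overline{\Omega}} \bH_1\cdot\bH_2 \,dV \;=\; \sum_j \alpha_j^{(1)}\iint_{S_j}\bH_2\cdot\bn\,dA.
\end{equation*}
If $\bH := \sum_j c_j H_j^+$ has all $S_j$-fluxes zero, choosing $\bH_1 = \bH_2 = \bH$ forces $\int|\bH|^2 = 0$, so $\bH=0$ and all $c_j$ vanish by linear independence of $\{H_j^+\}$. The most delicate part is this last duality identity, which is classical but requires careful handling of the cuts and of decay in the unbounded exterior.
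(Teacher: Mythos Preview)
Your uniqueness argument is correct and takes a genuinely different route from the paper. Where the paper invokes de Rham duality---observing that the $H_j^+$ form a basis of the relative cohomology $H^2_{\dR}(\mathbb{R}^3\setminus\overline{\Omega},\Gamma)$ and that the flux functionals $\bH\mapsto\int_{S_j}\bH\cdot\bn\,dA$ span its dual---you instead prove nondegeneracy of the period matrix $P_{ij}=\mu\int_{S_i}H_j^+\cdot\bn\,dA$ directly via the energy identity
\[
\int_{\mathbb{R}^3\setminus\overline{\Omega}}\bH_1\cdot\bH_2\,dV=\sum_j\alpha_j^{(1)}\int_{S_j}\bH_2\cdot\bn\,dA,
\]
obtained by cutting along the $S_j$ and integrating by parts. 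This is the classical Riemann-bilinear-relations argument for harmonic fields; it is longer but elementary and avoids the cohomological language. You also supply an explicit jump/Liouville identification of $\frac{1}{\mu}\nabla\times\bA[\bJ_0]$ with $\sum_j c_j H_j^+$, whereas the paper simply cites Werner for this step. Both approaches buy the same conclusion; yours is more self-contained, the paper's is shorter.

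One slip in your existence paragraph: the claim that Theorem~\ref{thm1} shows \emph{every} MFIE solution $\bJ_p$ satisfies~\eqref{Acond1} is false, and indeed contradicts your own uniqueness argument---if all MFIE solutions satisfied the flux conditions, those conditions would be vacuous on the null-space and could not restore uniqueness. Theorem~\ref{thm1} is derived from the electric boundary condition $\bn\times\bE^{\tot}=0$, which a generic MFIE solution need not respect. Existence is nonetheless fine: once you know the map $\bJ\mapsto\bigl(\oint_{B_i}\bA[\bJ]\cdot\ds\bigr)_{i=1}^g$ is injective on the $g$-dimensional null-space $N$ (your uniqueness step), it is a bijection $N\to\mathbb{R}^g$, so any particular MFIE solution can be corrected by a unique element of $N$ to meet the prescribed fluxes. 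The paper handles existence the same way, simply noting that solvability of the MFIE reduces everything to showing the combined null-space is trivial.
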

\begin{proof} As noted above, the equation $(\frac 12
  I-M_0)\bJ=\bn\times\bH_0^{\In}$ is solvable for any physically
  meaningful right hand side. Thus the only issue is that of
  uniqueness; we need to show that the combined null-space of
  the operator $(\frac 12 I-M_0)$ and~\eqref{Acond1} is trivial. The
  null-space of the integral operator is spanned by $\{\bn\times Z_j^+:
  j=1,\dots,g\}.$ Hence, if $\bJ$ lies in the null-space of this
  integral operator, then the exterior field takes the form
  \begin{equation}
    \bH= \frac{1}{\mu} \nabla\times \bA[\bJ]= \sum_{j=1}^g\beta_jH_j^+,
  \end{equation}
see \cite{WERNER}. Under the correspondence between vector fields and
  2-forms
\begin{equation}
  h_1\pa_{x_1}+h_2\pa_{x_2}+h_3\pa_{x_3} \leftrightarrow
h_1 dx_2\w dx_3+h_2dx_3\w dx_1+h_3 dx_1\w dx_2,
\end{equation}
the fields $\{H_j^+: j=1,\dots, g\}$ are a basis for the relative cohomology group 
$H^2_{\dR}(\mathbb{R}^3\setminus\overline{\Omega},\Gamma).$  The functionals
\begin{equation}
  \bH\mapsto \int\limits_{S_j}\bH\cdot\bn dA= \frac{1}{\mu} \int\limits_{B_j}\bA[\bJ]\cdot \ds
\text{ for }j=1,\dots,g
\end{equation}
span the dual space to this vector space. Hence if these integrals all
vanish, then the coefficients $\{\beta_j\}$ must all vanish as well.
\end{proof}

\begin{remark} 
Let us suppose now that we have discretized the MFIE (\ref{mfie}) with
$2N$ unknowns (2 degrees of freedom for the surface current $\bJ$ at each
of $N$ points), resulting in the linear system
$A \boldsymbol j = \boldsymbol b$ and the constraints (\ref{Acond2}) by 
$C \boldsymbol j = \boldsymbol f$,
where $C$ is a $g \times 2N$ matrix.
It is straightforward to show that the system  
\[ [ A + QC] \boldsymbol j = \boldsymbol b + Q\boldsymbol f \]
has the same solution as the constrained equation. It is invertible so
long as the range of $Q$, a $2N \times g$ matrix, has a full rank
projection onto the null vectors of the adjoint $A'$.  In the low frequency
regime, it is sufficient to use $Q = C'$.
\end{remark}

\section{Time harmonic electromagnetics using the vector potential}

As soon as $\omega \neq 0$, the MFIE (\ref{mfie}) is formally
invertible and there is no need for the incorporation of consistency
constraints. Because there is a nearby singular problem, however, the
linear system is extremely ill-conditioned at low frequency. Without
the additional constraints the {\em condition number} is
$O(\omega^{-2}).$ In that regime, the incorporation of the constraints
improves the condition number considerably.

\section{Numerical Results}

For illustration, we consider the problem of scattering from a torus:
a genus one surface of revolution (Fig. \ref{fig2}), driven by a known
current loop in the exterior.
We have implemented a solver for the MFIE with and without the consistency condition 
(with a detailed discussion of the method to be reported
at a later date). One convenient measure of the error in the solution 
is the flux of $\bH^{\tot}$ through the hole in the torus,
which should vanish on a perfect conductor
(see Fig. \ref{fig3}).

\section{Conclusions}

In this paper, we have derived a simple consistency
condition (Theorem 1) for the vector potential in the context of scattering 
from perfect conductors, valid at any frequency.
It is of physical interest for three 
reasons: 1) it describes an interesting correlation between the electric and 
magnetic field that persists at zero frequency, 
2) it enforces uniqueness for the magnetic field integral 
equation (MFIE) in  multiply connected domains in the static limit, and 3)
it improves the stability and robustness of the MFIE in the low 
frequency regime.

It is, perhaps, worth noting that the role of the vector potential in
the consistency condition is, in some sense,
dual to its role in the Aharonov-Bohm effect. In the latter case,
it is the line integral of the vector potential around $A$-cycles
that is critical.
That integral measures the {\em solenoidal} ({\em poloidal}) current flow 
on the torus, which induces a zero electromagnetic field in the exterior. 
The line integral of $\bA$ around $B$-cycles, on the other hand, is sensitive to
the {\em toroidal} current flow on the surface. On a perfect conductor,
that induced current exactly cancels the flux of the incoming magnetic
field through the ``holes.''

\section{Acknowledgments}
We wish to thank Eric Michielssen for several useful discussions.

\end{document}